\newtheorem{theorem}{Theorem}[section]
\newtheorem{problem}[theorem]{Problem}
\newtheorem{definition}[theorem]{Definition}
\newtheorem{remark}[theorem]{Remark}
\newtheorem{assumption}{Assumption}
\numberwithin{equation}{section}
\newcommand{\R}{{\mathbb{R}}}
\newcommand{\EE}{\mathds{E}}
\newcommand{\PP}{\mathds{P}}
\newenvironment{nouppercase}{%
	\renewcommand{\uppercasenonmath}[1]{}}{}
\newtcolorbox{resp}[1][]{%
	enhanced jigsaw,%
	colback=gray!5!white,%
	colframe=gray!80!black,%
	size=small,%
	boxrule=1pt,%
	halign title=flush center,%
	coltitle=black,%
	breakable,%
	drop shadow=black!50!white,%
	attach boxed title to top left={xshift=1cm,yshift=-\tcboxedtitleheight/2,yshifttext=-\tcboxedtitleheight/2},%
	minipage boxed title=3cm,%
	boxed title style={%
		colback=white,%
		size=fbox,%
		boxrule=1pt,%
		boxsep=2pt,%
		underlay={%
			\coordinate (dotA) at ($(interior.west) + (-0.5pt,0)$);
			\coordinate (dotB) at ($(interior.east) + (0.5pt,0)$);
			\begin{scope}[gray!80!black]
				\fill (dotA) circle (2pt);
				\fill (dotB) circle (2pt);
			\end{scope}
		}%
	},%
	#1%
}
\begin{document}

\begin{abstract}
This work is concerned with a formal approach for safety controller synthesis of stochastic control systems with both process and measurement noises while considering \emph{wireless communication networks} between sensors, controllers, and actuators. The proposed scheme is based on \emph{control barrier certificates (CBC)}, which allows us to provide safety certifications for wirelessly-connected stochastic control systems. Despite the available literature on designing control barrier certificates, there has been unfortunately no consideration of wireless communication networks to capture potential packet losses and end-to-end delays, which is absolutely crucial in safety-critical real-world applications. In our proposed setting, the key objective is to construct a control barrier certificate together with a safety controller while providing a lower bound on the satisfaction probability of the safety property over a finite time horizon. We propose a systematic approach in the form of sum-of-squares optimization and matrix inequalities for the synthesis of CBC and its associated controller. We demonstrate the efficacy of our approach on a permanent magnet synchronous motor. For the application of automotive electric steering under a wireless communication network, we design a CBC together with a safety controller to maintain the electrical current of the motor in a safe set within a finite time horizon while providing a formal probabilistic guarantee.
\end{abstract}

\title{{\LARGE Safety Barrier Certificates for Stochastic Control Systems with Wireless Communication Networks}}

\author{{\bf {\large Omid Akbarzadeh}}}
\author{{\bf {\large Sadegh Soudjani}}}
\author{{{\bf {\large Abolfazl Lavaei}}}\\
	{\normalfont School of computing, Newcastle University, United Kingdom}\\
\texttt{\{o.akbarzadeh2, sadegh.soudjani, abolfazl.lavaei\}@newcastle.ac.uk}}

\pagestyle{fancy}
\lhead{}
\rhead{}
  \fancyhead[OL]{Omid Akbarzadeh, Sadegh Soudjani, Abolfazl Lavaei}

  \fancyhead[EL]{Safety Barrier Certificates for Stochastic Control Systems with Wireless Communication Networks}
  \rhead{\thepage}
 \cfoot{}
 
\begin{nouppercase}
	\maketitle
\end{nouppercase}

\section{Introduction}
This paper is mainly motivated by the difficulties arising in the safety controller design of complex stochastic systems with wireless communication networks while capturing potential packet losses and end-to-end delays. The complexity of controller synthesis over  stochastic systems is not only due to continuous state and input sets with potentially high dimensions, but also due to the stochastic nature of dynamics that is capable of capturing different sources of uncertainty in both model and environment. Consequently, synthesizing a formal controller for such complicated systems to fulfill high-level logic specifications, \emph{e.g.,} linear temporal logic (LTL) properties~\cite{baier2008principles}, is naturally a critical problem. To alleviate the aforesaid complexities, existing results in the literature have been mainly reliant on employing finite abstractions to approximate original (a.k.a. concrete) models with simpler ones with discrete state sets (see \emph{e.g.,}~\cite{ref1,julius2009approximations,zamani2015symbolic}). However, the proposed abstraction-based approaches are based on discretizing state and input sets and may not be applicable to many real-world scenarios due to the state-explosion problem. Hence, compositional techniques have been proposed over the past recent years to build finite abstractions of high-dimensional stochastic systems via those of smaller-scale subsystems (see \emph{e.g.,}~\cite{SAM17,lavaei2020compositional,hahn2013compositional,lavaei2022dissipativity,ref3,ref4,nejati2021compositional,lavaei2022scalable}).

Another promising solution for formal verification and control synthesis of stochastic systems is to use control barrier certificate, as a \emph{discretization-free approach}, initially proposed in~\cite{ref5,prajna2004safety}. Generally speaking, barrier certificates are Lyapunov-like functions whose main goal is to fulfill some conditions on the barrier function itself and its evaluation over one-step transition of the system. Given a set of initial states, an initial level set of barrier certificates separates an unsafe region from trajectories of the system. Consequently, the presence of such barrier certificates provides a formal probabilistic guarantee over the safety of the system (cf. Fig.~\ref{fig.12}). Barrier certificates have been extensively used for formal analysis of complex stochastic systems (see \emph{e.g.,}~\cite{ref7, ref8, ref9, ref10, ref11, ref12, ref13,ref6,anand2022small,lavaei2022compositional,nejati2022compositional,nejati2022dissipativity}).
\begin{figure}[t!]
	\includegraphics[scale=0.55]{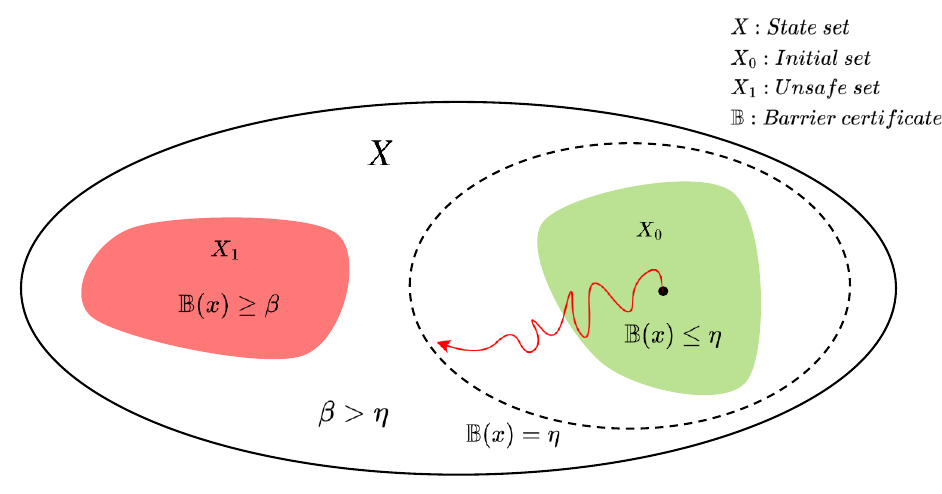}
	\centering
	\caption{A control barrier certificate for stochastic control systems. The dashed line represents the initial level set $\mathbb{B}(x) = \eta$.}
	\label{fig.12}
\end{figure}

Unfortunately, although the proposed approaches on both abstraction-based and discretization-free (\emph{i.e.,} control barrier certificate) techniques are very comprehensive in the relevant literature, none of those approaches consider wireless communications to capture potential packet losses and end-to-end delays, while enforcing high-level logic properties (\emph{e.g.,} LTL specifications). In particular, establishing an effective communication in control systems is crucial as it ensures that information is transmitted accurately and timely between various components of underlying systems (\emph{e.g.,} sensor-to-controller and controller-to-actuator branches of the network). This can facilitate efficient decision-making, troubleshooting, and overall performance of control systems. Currently, it is a widespread practice to establish connections between physical processes and remote control stations via \emph{wired communication}. Nevertheless, proposed approaches based on cable impose restrictions on flexibility and elevate the expenses of both installation and maintenance for the entire system. In particular, compared to \emph{wireless communications}, cables are prone to experiencing wear and tear, ultimately leading to their failure after a certain period of time. This phenomenon mainly results in inaccuracies that are difficult to detect, and accordingly, diminish efficiency.

The key contribution of this paper is to propose a formal approach for safety controller synthesis of stochastic linear control systems with \emph{wireless communications} between sensor-to-controller and controller-to-actuator branches using a \emph{Glossy} network architecture~\cite{ref24}. In our proposed framework, we construct a control barrier certificate together with a safety controller while providing a lower bound on the satisfaction probability of the safety property within a finite time horizon. We also provide a systematic approach in the form of sum-of-squares optimization and matrix inequalities for the construction of CBC and its corresponding controller. We verify the fruitfulness of our approach on a permanent magnet synchronous motor. Given a wireless communication network in the system, we design a CBC together with a safety controller to maintain the electrical current of the motor in an interval within a bounded time horizon while providing a formal probabilistic guarantee.

There has been a limited number of work on formal analysis of cyber-physical systems with wireless communications. Existing results include formal verification and design of wireless cyber-physical systems~\cite{ 7, ref16,ref22}, and \emph{stability analysis} of control systems with wireless communication networks~\cite{ref17,ref19,ref20}. It is worth mentioning that the type of wireless communication networks, employed in our work, is similar to the one used in~\cite{ref17}. However, our results are developed for \emph{stochastic systems} with both process and measurement noises while ensuring \emph{safety specifications} instead of considering stability properties over deterministic systems, which is the case in~\cite{ref17}. To the best of our knowledge, this work is the first to propose a notion of control barrier certificates for ensuring \emph{safety certifications} over \emph{wirelessly connected stochastic control systems}.

The remainder of this article is formalized as follows. In Section~\ref{sec:discrete}, we present required mathematical notations and a formal definition of discrete-time stochastic linear control systems. In Section~\ref{sec:wireless}, we present wireless communication networks together with our controller synthesis framework. In Section~\ref{sec:CBC}, we introduce control barrier certificates for stochastic control systems with wireless communication networks and propose a systematic approach to design CBC and its associated controller while ensuring the desired safety requirements. Finally, in Section~\ref{sec:case}, the results of the paper are illustrated over a permanent magnet synchronous motor, followed by a concluding discussion and future work direction in Section~\ref{conclude}.

\section{Discrete-Time Stochastic Control Systems}{\label{sec:discrete}}

\subsection{Notation and Preliminaries}
We present the sets of real, positive, and non-negative real numbers using, respectively, $\mathbb{R}$, $\mathbb{R}_{>0}$, and $\mathbb{R}_{\geq 0}$. Symbols $\mathbb{N}:=\{0,1,2,\dots\}$ and $\mathbb{N}_{\geq 1}:=\{1,2,\dots\}$ denote, respectively, sets of non-negative and positive integers. A column vector, given $N$ vectors $x_i \in \mathbb{R}^{n_i}$, is represented by $x=[x_1;\dots;x_N]$. We denote $\text{Tr}(A)= \sum_{i=1}^N a_i$, given a matrix $A \in \mathbb{R}^{N \times N}$ with diagonal elements $a_1, \dots , a_N $. An identity matrix in $\R^{n\times{n}}$ is denoted by $\mathbf{I}_n$. A normal distribution is denoted by $\mathcal{N}(\mu,\Sigma)$, with $\mu$ being a mean value and $\Sigma$ its covariance matrix. Symbol $\mathbf{0}_{n\times m}$ denotes a zero matrix in $\mathbb R^{n\times{m}}$, \emph{i.e.,} all elements are equal to zero. 

We denote the probability space of events by $(\Omega, \mathcal{F}_{\Omega}, \PP_{\Omega})$, with $\Omega$ being a sample space, $\mathcal{F}_{\Omega}$ a sigma-algebra on $\Omega$, and $\PP_{\Omega}$ a probability measure. We assume random variables $\mathcal X$ are measurable functions, \emph{i.e.,} $\mathcal X:(\Omega,\mathcal F_{\Omega})\rightarrow (S_{\mathcal X},\mathcal F_{\mathcal X})$, in a way that $\mathcal X$ induces a probability measure on $(S_{\mathcal X},\mathcal F_{\mathcal X})$ as $Prob\{\mathcal A\} = \PP_{\Omega}\{\mathcal X^{-1}(\mathcal A)\}$ for all $\mathcal A\in \mathcal F_{\mathcal X}$.
A topological space  $\mathbf{S}$ is Borel, denoted by $\mathcal B(\mathbf{S})$, if it is homeomorphic to a Borel subset of a Polish space, \textit{i.e.}, a separable and metrizable space.

\subsection{Discrete-Time Stochastic Control Systems}
In this work, we focus on discrete-time stochastic linear control systems as presented in the next definition.

\begin{definition}\label{dt-SLS}
  A discrete-time stochastic linear control system (dt-SLS) is defined by the tuple
   \begin{equation}
    \mathbb{S}=(X,U,Y,A,B,w_1,w_2),
    \label{eq.1}
\end{equation} 
 
\noindent where:
\begin{itemize}
    \item $X \subseteq \mathbb{R}^n$ is a Borel space as the state set; 
    \item $U \subseteq \mathbb{R}^m$ is a Borel space as the input set;
    \item $Y \subseteq \mathbb{R}^n$ is a Borel space as the output set;
     \item $A\in \mathbb R^{n\times n}$ and $B\in \mathbb R^{n\times m}$ are, respectively, state and input matrices of dt-SLS;
    \item $w_1$ denotes a sequence of independent-and-identically distributed (i.i.d.) random variables as the systems' process noise;
    \item $w_2$ denotes a sequence of i.i.d. random variables as the sensors' measurement noise.
\end{itemize}
\end{definition}

\noindent Given an initial state $x_0 = x(0) \in X$ and under an input signal $u(\cdot): \mathbb N \to U$, the evolution of the state of dt-SLS $\mathbb{S}$ can be described as

\begin{equation}
\mathbb{S} :
    \begin{cases}
        x(k+1) = A x(k) + B u(k) + w_1(k), \\
      y(k) = x(k)+ w_{2}(k),\quad\quad k\in \mathbb N.
    \end{cases}  
    \label{eq.12}
\end{equation}

\noindent The random sequence $x_{x_0 u}\!: \Omega\times \mathbb N \to X$ satisfying (\ref{eq.12}) for any initial state $x_0 \in X$ , under an input signal $u(\cdot): \Omega \to U$, is called \emph{solution process} of dt-SLS at time $k \in \mathbb{N}$.

Next, we formally define the safety problem for dt-SLS $\mathbb{S}$ in~\eqref{eq.12} that we aim ultimately to solve.
\begin{definition}\label{Safety}
	Given a safety specification $\Upsilon = (X_0, X_1, \mathcal T)$, with $X_0,X_1\subseteq X$ being, respectively, initial and unsafe sets of dt-SLS ($X_0\cap X_1 = \emptyset$), the dt-SLS $\mathbb{S}$ is said to be safe within time horizon $\mathcal{T} \in \mathbb{N}\cup\{\infty\}$, denoted by $\mathbb{S}\models\Upsilon$, if trajectories of $\mathbb{S}$ starting from $X_{0}\subseteq X$ do not reach $X_1\subseteq X$ during  $\mathcal{T}$\!. Since trajectories
	of dt-SLS $\mathbb{S}$ are probabilistic, we are interested in
	computing $\PP \{\mathbb{S}\models\Upsilon\}\ge 1-\varepsilon$, where $\varepsilon \in[0,1]$.
\end{definition}

The subsequent section is dedicated to the wireless network architecture together with its controller synthesis framework. In particular, we aim to formulate communications between different components of dt-SLS including sensors, controller, and actuators over a wireless network. 

\begin{figure}[h!]
	\includegraphics[scale=0.7]{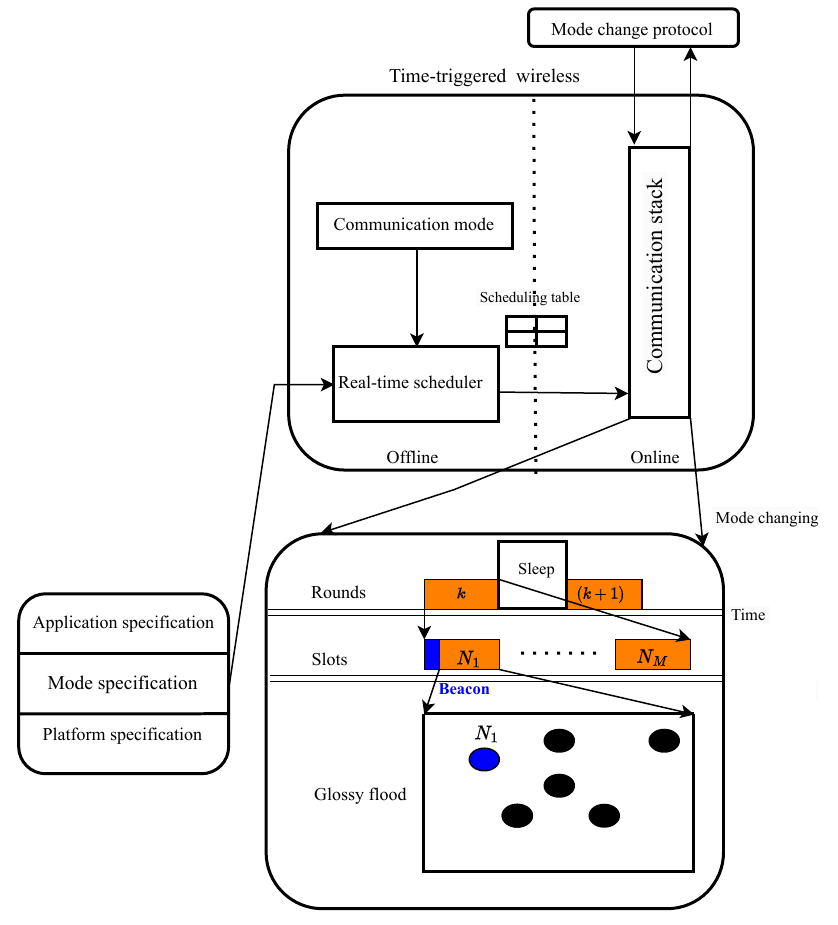}
	\centering
	\caption{A wireless communication protocol based on Glossy. This protocol operates as various periodic communication rounds, in which each round includes a sequence of non-overlapping time slots. All nodes of the network should get involved in a Glossy flood in every time slot, where a message is transmitted from a node to all other nodes.}
	\label{fig.2}\vspace{0.3cm}
\end{figure}

\section{Wireless Communication Network}{\label{sec:wireless}}
In this paper, we employ a multi-hop wireless protocol that is able to support many-to-all communications while satisfying requirements for a potentially low packet loss and end-to-end delay \cite{ref23}. It is shown that a network architecture based on \emph{Glossy}
can fulfill potential requirements with a high efficiency \cite{ref24}. Glossy is a flooding architecture for wireless sensor networks in which its performance is independent of node density, which facilitates its application in a variety of real-world scenarios. In a multi-hop wireless network, a node refers to a device or sensor that is part of the network and is capable of transmitting, receiving, and relaying data packets over the network. Furthermore, on top of Glossy, a suitable protocol is designed using a time-triggered wireless as a scheduling framework that can reduce an end-to-end delay to a few tens of milliseconds for many-to-all communications \cite{ref17, ref25}. A high-level overview of this protocol is depicted in Fig.~\ref{fig.2}. As it can be observed, this protocol operates as different periodic communication rounds, in which each round includes a sequence of non-overlapping time slots. In each time slot, all nodes of the network should get involved in a Glossy flood, where a message is transmitted from a node to all other nodes. A beacon slot is initiated by a dedicated node to synchronize at the beginning of each communication round. Under this wireless communication network, we formalize the main problem that we want to solve. \vspace{0.2cm}

\begin{resp}
	\begin{problem}\label{problem1}
		Consider a dt-SLS $\mathbb{S}$ with a Glossy wireless communication protocol and a safety specification $\Upsilon = (X_0, X_1, \mathcal T)$. Synthesize a formal controller based on control barrier certificates to ensure the satisfaction of safety specification $\Upsilon$ within a time horizon $\mathcal T$ with a guaranteed probabilistic bound $\varepsilon \in[0,1]$, \emph{i.e.,}\vspace{-0.2cm}
		\begin{align*}
		\PP \Big\{\mathbb{S}\models\Upsilon\Big\}\ge 1-\varepsilon.
		\end{align*}
	\end{problem}
\end{resp}
\begin{remark}
    The proposed multi-hop wireless network assumes that the losses for \emph{measurement and control} packets occur independently in a statistically uncorrelated manner. Specifically, it is assumed that the occurrence of a loss for a measurement packet does not affect the probability of a loss for a control packet, and vice versa.
\end{remark}
   
To address Problem~\ref{problem1}, inspired by~\cite{ref17}, we present our controller synthesis framework in the next subsection.
\begin{figure}[t]
	\includegraphics[scale=0.6]{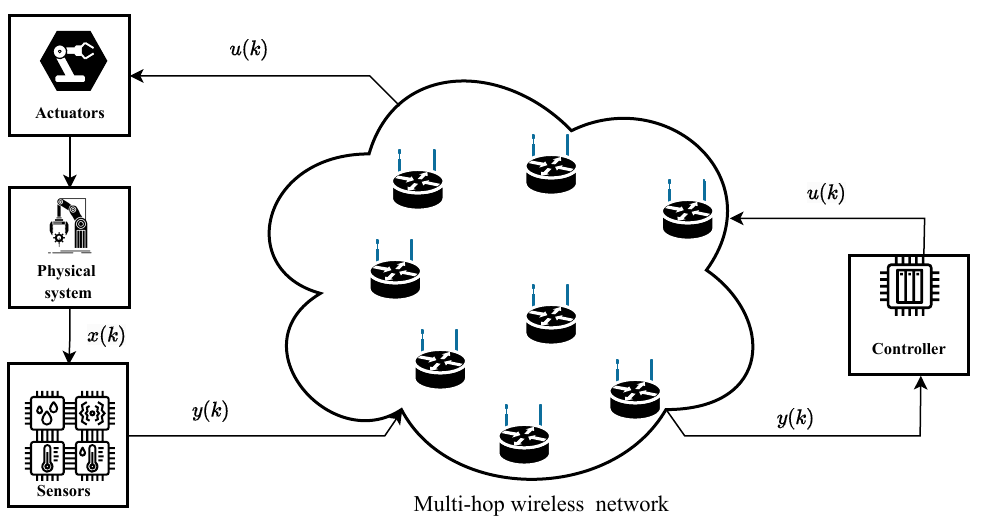}\vspace{0.1cm}
	\centering
	\caption{An overview of stochastic control systems with different components under a multi-hop wireless communication network.}
	\label{fig.1}\vspace{0.3cm}
\end{figure}

\subsection{Controller Synthesis Framework}
Here, we present our controller synthesis framework under a multi-hop wireless communication network, as depicted in Fig~(\ref{fig.1}). The underlying framework consists of sensors for monitoring systems' states, wireless transmission of data to controllers, processing of data using control algorithms, and wireless transmission of control signals to physical processes. The main goal of the controller synthesis is to maintain all trajectories of dt-SLS safe in the sense of Definition~\ref{Safety} in the presence of both process and measurement noises. In our controller framework depicted in Fig.~\ref{fig.3}, we consider $\hat{u}$ as a predicted (estimated) control input and $\hat{x}$ as a predicted state that is computed by the controller. In addition, we consider $\theta(k)$ and $\Phi(k)$ as two i.i.d. \emph{binary random variables} (mutually independent) with Bernoulli distributions that capture \emph{lost and received communications} between, respectively, sensor-to-controller and controller-to-actuator branches. Moreover, $\mu_\theta $ and $\mu_\Phi$ are the chance of successful packets delivery corresponding to $\theta(k)=\!1$ and $\Phi(k)=\!1$, respectively.

\noindent Since the received measurements in the controller belong to one time step in the past, the controller updates the system as follows:
\begin{equation}
\hat{x}(k+1) = \theta(k)A y(k)+ (1-\theta(k))A \hat{x}(k) +B \hat{u}(k),
\label{est}
\end{equation}
  
\noindent where $\hat{x}(k)$ is the predicted state computed by controller. By employing $y(k)$ as in~\eqref{eq.12}, $\hat{x}(k+1)$ can be rewritten as
\begin{equation}
\hat{x}(k+1) = \theta(k) A x(k)+ (1-\theta(k))A \hat{x}(k) +B \hat{u}(k) + \theta(k) A w_2(k).
\end{equation}

\begin{figure}[t]
	\includegraphics[scale=0.85]{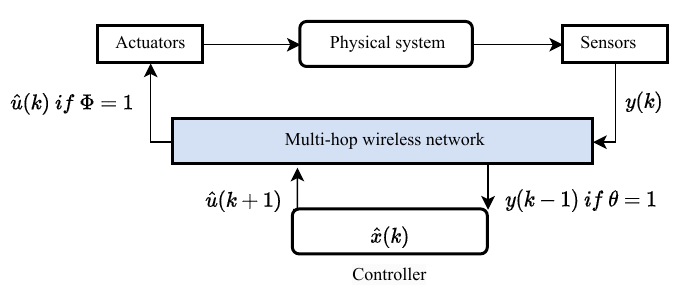}\vspace{0.1cm}
	\centering
	\caption{Communications between physical systems and controllers under a multi-hop wireless network. Symbols $\theta$ and $\Phi$ are two i.i.d. binary random variables that indicate \emph{lost and received communications} between, respectively, sensor-to-controller and controller-to-actuator branches.}
	\label{fig.3}\vspace{0.3cm}
\end{figure}
\!\!\noindent In particular, if the measurement message is successfully transmitted, controller computes the state prediction based on $y(k-1)=x(k-1)+w_2(k-1)$. If the measurement massage is lost, controller propagates the previous prediction as $\hat{x}(k-1)$. In our underlying framework, we design an output-feedback controller as
  \begin{equation}\label{eq.6}
    u(k)=Fy(k) = F(x(k)+w_2(k)),
\end{equation} 
where $F$ is a control matrix of an appropriate dimension. 

In multi-hop wireless communication networks, both measurement and control input massages may have some delay. Inspired by \cite{ref17}, we consider the actual control input $u(k)$ depending on the measurement from two time steps ago, \emph{i.e.,} $y(k-2)$, if there are no losses of messages in the network. If a measurement or control input message is lost, the actual control input $u(k)$ remains unchanged given a zero-order hold embedded in the actuator. It implies that, if a control input message is lost, the actuator holds the last input value and applies it until the next input value is received, \emph{i.e.,}
\begin{equation}
  {u}(k) = \Phi(k-1)\hat{u}(k) + (1-\Phi(k-1))u(k-1).
 \label{eq.7}
\end{equation}
Another time step is required for the predicted control input to reach to the physical system. To calculate the next control input $\hat{u}(k+1)$, we proceed with propagating the system one additional time step as
\begin{equation}
\hat{u}(k+1) = F(A\hat{x}(k)+B\hat{u}(k)).
 \label{eq.8}
\end{equation}

	\begin{remark}\label{Remark}
		If there is no massage loss in the network, including sensors-to-controller and controller-to-actuators wireless links, equation \eqref{eq.7} can be rewritten as ${u}(k) = \Phi(k-1)\hat{u}(k)$. Accordingly, using equations~\eqref{eq.8} and~\eqref{est}, one has ${u}(k) = F\big(A(Ay(k-2)+B\hat{u}(k-2))+ B\hat{u}(k-1)\big)$.
\end{remark}

In our proposed setting, we augment the real and estimated states and inputs of dt-SLS as $z= [x;\hat{x};u;\hat{u}]$, $z \in \mathbb{R}^\kappa$, with $\kappa:=2(n+m)$. Furthermore, we augment the process and measurement noises as $w=[w_{1},w_{2}]$, $w \in \mathbb{R}^{2n}$.
Then the evolution of the \emph{augmented dt-SLS} can be described as\footnote{The full derivation for obtaining the augmented system in~\eqref{augment} is provided in Appendix.}
\begin{equation}\label{augment}
    z(k+1) = \tilde{A}(k) z(k) + \tilde{E}(k)w(k),
\end{equation}  
where
\begin{align}\label{ag}
 &\tilde{A}(k) = \begin{bmatrix}
A &  0 & B &  0  \\
\theta(k) A & (1-\theta(k))A & 0 & B  \\
0 & \Phi(k) F A & (1-\Phi(k))\mathbf{I}_{m} & \Phi(k) FB\\
0 & FA & 0 & FB 
\end{bmatrix}\!\!,
&\tilde{E}(k) = \begin{bmatrix}
1 & 0 \\
0 & \theta(k) A\\
0 & 0\\
0 & 0
\end{bmatrix}\!\!.
\end{align}
We transform $\theta(k)$ by rewriting it as $\theta(k) = \mu_{\theta} (1 - \delta_\theta(k))$, where $\delta_\theta(k)$ is a new binary random variable with values of  either $1$ or $1- \frac{1}{\mu_{\theta}}$. Similarly, we transform $\Phi(k)$ as $\Phi(k) = \mu_\Phi (1 - \delta_\Phi(k))$, where $\delta_\Phi(k)$ is either $1$ or $1- \frac{1}{\mu_{\Phi}}$. These two transformations mainly facilitate our computation process by allowing us to decompose $\tilde{A}(k)$ as 
\begin{equation}
\tilde{A}(k) = \tilde{A}_{0} + \sum^{2}_{i=1} \tilde{A}_{i} \delta_{i}(k),
\label{aug1}
\end{equation}
where 
\begin{align*}
 &\tilde{A}_0=
\begin{bmatrix}
A &&  0 && B &&  0    \\
\mu_\theta A &&(1-\mu_\theta) A && 0 && B  \\
0 && \mu_\Phi F A && (1-\mu_\Phi) \mathbf{I}_{m}&& \mu_\Phi FB  \\
0 && FA && 0 && FB 
\end{bmatrix}\!\!, 
\quad\tilde{A}_1=
\begin{bmatrix}
0 &&  0 && 0 &&&&  0    \\
-\mu_\theta A && \mu_\theta A && 0 &&&& 0  \\
0 && 0 && 0 &&&& 0 \\
0 && 0 && 0 &&&& 0
\end{bmatrix}\!\!,\\
&\tilde{A}_2=
\begin{bmatrix}
0 &&  0 && 0 &&  0 &   \\
 0 && 0 && 0 && 0  \\
0 && -\mu_\Phi F A && \mu_\Phi\mathbf{I}_{m} && -\mu_\Phi FB  \\
0 && 0 && 0 && 0
\end{bmatrix}\!\!,
\quad\delta_i(k)=
\begin{cases}
\delta_1(k) = \delta_\theta(k), \\
\delta_2(k) = \delta_\Phi(k).  
\end{cases}  
\end{align*}

\noindent Then probabilities of $\delta_{\theta}(k)$ and $\delta_{\Phi}(k)$ can be defined as

	\begin{align*}
	&\PP\big[\delta_{\theta}(k)\big]\!\!:=\! \begin{cases}
	\PP\big[\delta_{\theta}(k)\!=\! 1\big] \!=\! 1\!-\!\mu_\theta, \\
	\PP\big[\delta_{\theta}(k)\!=\! 1\!-\!\frac{1}{\mu_\theta}\big] \!=\! \mu_\theta, \\
	\end{cases}
	&\PP\big[\delta_{\Phi}(k)\big]\!\!:=\! \begin{cases}
	\PP\big[\delta_{\Phi}(k)\!=\! 1\big] \!=\! 1\!-\!\mu_\Phi, \\
	\PP\big[\delta_{\Phi}(k)\!=\! 1\!-\!\frac{1}{\mu_\Phi}\big] \!=\! \mu_\phi, \\
	\end{cases} 
	\end{align*}

\noindent where $\EE\big[\delta_{\theta}(k)\big]=0, \EE\big[\delta_{\Phi}(k)\big]=0$, $\text{Var}\big[\delta_{\theta}(k)\big] =\frac{1}{\mu_\theta}-1$, and $\text{Var}\big[\delta_{\Phi}(k)\big]=\frac{1}{\mu_\Phi}-1$. Similarly, using the proposed transformations, the noise matrix $\tilde{E}$ can be decomposed as
 \begin{equation}
\tilde{E} = \tilde{E}_{0} + \sum^{2}_{i=1} \tilde{E}_{i} \delta_{i},
\label{aug2}
\end{equation}
where
\begin{align*}
\tilde E_0\!=\!
\begin{bmatrix}
1 &  0 \\
0 & \mu_\theta A\\
0 & 0\\
0 & 0
\end{bmatrix}\!\!, 
\quad\tilde E_1\!=\!
\begin{bmatrix}
0 &  0 \\
0 & -\mu_\theta A\\
0 & 0\\
0 & 0
\end{bmatrix}\!\!,
\quad\tilde E_2\!=\! \mathbf{0}_{2(n+m)\times2n}.
\end{align*}

In the next section, we define control barrier certificates for the \emph{augmented dt-SLS}. We then leverage CBC and formally quantify a lower bound on the satisfaction probability of the safety specification in the sense of Definition~\ref{Safety}. Moreover, we provide a systematic approach to design a CBC and its associated controller based on solving sum-of-squares optimization problems and matrix inequalities.
\section{Control Barrier Certificates}{\label{sec:CBC}}
Here, we first present control barrier certificates for the \emph{augmented dt-SLS} in~\eqref{augment} as the following.
\begin{definition}\label{barrier}
Consider a dt-SLS $\mathbb{S}=(X,U,Y,A,B, w_1, w_2)$ with $X_{0}, X_{1} \subseteq X$ as its initial and unsafe sets, respectively. Consider a multi-hop wireless communication network with estimated states and inputs $\hat x \in X, \hat u \in U$. Let $u(k) = F(x(k) + w_2(k))$ as proposed in~\eqref{eq.6}. Then a function $\mathbb{B}\!: X\times X \times U\times U \to \mathbb{R}_{\geq0}$ is said to be a control barrier certificate (CBC) for the augmented dt-SLS in~\eqref{augment} if there exist constants $\beta,\eta \in \mathbb{R}_{\geq0}$, with $ \beta > \eta $, and $c\in \mathbb{R}_{>0}$, such that
\begin{itemize}
\item $\forall z \in X_{0}\times X \times U\times U\!:$
\begin{align}\label{con}
\mathbb{B}(z) \leq \eta,
\end{align}
\item $\forall z \in X_{1}\times X \times U\times U\!:$
\begin{align}\label{con1}
\mathbb{B}(z) \geq \beta,
\end{align}
\item $\forall z \in X\times X \times U\times U\!:$
\begin{equation}\label{eq.21}
\EE\Big[ \mathbb{B}(z(k+1)) \,\big|\, z(k) \Big] \leq \mathbb{B}(z(k))+c.
\end{equation}
\end{itemize}
\end{definition}

In the following theorem, under the fundamental results of~\cite{ref26}, we employ Definition~\ref{barrier} and quantify an upper bound on the probability that trajectories of dt-SLS reach an unsafe region within a finite time horizon.

\begin{theorem}\label{Kushner}
Consider a dt-SLS $\mathbb{S}=(X,U,Y,A,B,w_1,w_2)$ as in Definition~\ref{dt-SLS}. Let there exist a CBC $\mathbb{B}$ for the augmented dt-SLS in~\eqref{augment} as in Definition~\ref{barrier}. Then the probability that the solution process of $\mathbb{S}$ starting form any initial state $x_0 \in X_0$ reaches $X_1$ within a time horizon $k \in [0,\mathcal T]$ is
\begin{equation}
    \PP \Big\{x_{x_0 u}(k) \in X_1~\text{for some}~ k \in [0,\mathcal T] \,\big|\, x_0,u \Big\} \leq  \frac{\eta + c \mathcal T}{\beta}.
    \label{zet}
\end{equation} 
\end{theorem}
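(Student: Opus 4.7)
The plan is to view condition~\eqref{eq.21} as saying that $\mathbb{B}(z(k))$ is a nonnegative stochastic process whose one-step conditional mean grows by at most $c$, and then to apply a Kushner/Doob-type maximal inequality to bound the probability that it ever crosses the level $\beta$ within the horizon $\mathcal{T}$. The three CBC conditions of Definition~\ref{barrier} are precisely tailored for this: \eqref{con} controls the initial value of $\mathbb{B}$, \eqref{con1} ensures that any visit to the unsafe region forces $\mathbb{B}$ above $\beta$, and \eqref{eq.21} supplies the one-step drift bound that drives the maximal inequality.

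First I would fix an arbitrary $x_0 \in X_0$, so that the augmented dynamics~\eqref{augment} (together with the output-feedback law~\eqref{eq.6}) determine the distribution of $z(k)$ on $[0,\mathcal{T}]$; by~\eqref{con}, we have $\mathbb{B}(z(0)) \leq \eta$ almost surely. Next I would use~\eqref{con1} to note the event-inclusion
\begin{align*}
\Big\{x_{x_0 u}(k) \in X_1 \text{ for some } k \in [0,\mathcal T]\Big\}
\;\subseteq\;
\Big\{\max_{0 \leq k \leq \mathcal T} \mathbb{B}(z(k)) \geq \beta\Big\},
\end{align*}
so that the left-hand side of~\eqref{zet} is upper bounded by the probability on the right. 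Finally, the drift bound~\eqref{eq.21} implies that $M(k) \Let \mathbb{B}(z(k)) - c\,k$ is a supermartingale with respect to the natural filtration of $z$, and applying the maximal inequality of Kushner~\cite{ref26} to the nonnegative process $\mathbb{B}(z(k))$ with additive drift $c$ gives
\begin{align*}
\PP\Big\{\max_{0 \leq k \leq \mathcal T} \mathbb{B}(z(k)) \geq \beta\Big\}
\;\leq\;
\frac{\EE[\mathbb{B}(z(0))] + c\,\mathcal T}{\beta}
\;\leq\;
\frac{\eta + c\,\mathcal T}{\beta},
\end{align*}
where the last step uses~\eqref{con}. Chaining this with the event-inclusion yields~\eqref{zet}.

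The argument is therefore essentially a direct invocation of a classical stochastic-stability result, and I would not expect any genuine obstacle. The only care needed is in handling the additive drift correctly: the textbook Doob maximal inequality is stated for pure supermartingales, whereas here the conditional mean is allowed to grow by $c$, so one should either cite the precise version in~\cite{ref26} that accommodates this or perform the $M(k)$-reduction with enough care to avoid sign issues (since $M(k)$ itself is not guaranteed to be nonnegative). Beyond this bookkeeping point, no further stochastic analysis is required, and the controller/wireless structure enters only implicitly through the fact that $\mathbb{B}$ is a CBC for the augmented system~\eqref{augment}.
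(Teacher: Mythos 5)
Your proposal is correct and follows essentially the same route as the paper's proof: you use condition~\eqref{con1} for the event-inclusion into $\{\sup_{0\leq k\leq \mathcal T}\mathbb{B}(z(k))\geq\beta\}$, and then invoke the maximal inequality of \cite[Theorem 3]{ref26} together with the drift condition~\eqref{eq.21} and the initial-set bound~\eqref{con}, exactly as the paper does. Your explicit supermartingale reduction via $M(k)=\mathbb{B}(z(k))-ck$ is just a slightly more detailed unpacking of the same citation and introduces no new ideas or gaps.
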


\begin{proof}
  According to condition (\ref{con1}), $$X_{1}\times X \times U\times U \subseteq \big\{z \in X\times X \times U\times U \enspace |\enspace  \mathbb{B}(z) \geq \beta \big\}.$$ Then we have
\begin{align*}  
 &\PP \Big\{x_{x_0 u}(k) \in X_1\times X \times U\times U~\text{for some}~ k \in [0,\mathcal T] \,\big|\, x_0,u \Big\} 
 \leq \PP\Big\{\sup_{0\leq k\leq \mathcal T} \enspace \mathbb{B}(z(k)) \geq \beta \,\big|\, x_0,u \Big\}.
\end{align*}

\noindent Then the proposed bound in (\ref{zet}) can be acquired by applying \cite[Theorem 3]{ref26} to (\ref{zet}) and employing, respectively, conditions~\eqref{eq.21} and~\eqref{con} in Definition~\ref{barrier}.$\hfill$
\end{proof}

The results in Theorem~\ref{Kushner} provide an upper bound for the solution process of $\mathbb{S}$ starting form the initial region $X_0$ reaches the unsafe region $X_1$ within the time horizon $k \in [0,\mathcal T]$. Now the satisfaction of the safety specification $\Upsilon$ within the same time horizon $\mathcal T$ can be guaranteed as
\begin{align*}
\PP \Big\{\mathbb{S}\models\Upsilon\Big\}\ge 1-\frac{\eta + c \mathcal T}{\beta}.
\end{align*}

\subsection{Computation of CBC and the Safety Controller}{\label{subsec:CBCC}}
In this subsection, we consider the structure of our control barrier certificate to be quadratic in the form of $\mathbb{B}(z) = z^{\top}Pz$, where $P\in\mathbb{R}^{\kappa \times\kappa} $, $\kappa=2(n+m)$, is a positive-definite matrix. We also assume that both noises $w_1$ and $w_2$ in~\eqref{eq.1} are random variables (mutually independent) with \emph{standard normal distribution}, \emph{i.e.,} $w_1\!\sim\! \mathcal{N}\!(0,\Sigma_{w_1}\!)$, $w_2\!\sim\! \mathcal{N}\!(0,\Sigma_{w_2}\!)$. Then, we propose a systematic approach to design a CBC and its associated safety controller enforcing the safety requirements of dt-SLS $\mathbb{S}$. In our setting, the first two conditions in Definition~\ref{barrier} can be reformulated as a sum-of-square (SOS) optimization problem, while the third condition is presented as some matrix inequalities. To do so, we first raise the following assumption.

\begin{assumption}\label{Assum}
	Suppose that for some constants $\mu_\theta,\mu_\Phi \in[0,1]$, there exist matrices $P\succ0$ and $F$ of appropriate dimensions such that the matrix inequality in~\eqref{LMI} holds. 
\end{assumption}

	\begin{figure*}
	\begin{align}\notag
	&\begin{bmatrix}
	A &&  0 && B &&  0 \\
	\mu_\theta A && (1\!-\!\mu_\theta) A && 0 && B  \\
	0 &&\mu_\Phi F A && (1\!-\!\mu_\Phi)\mathbf{I}_{m} && \mu_\Phi F B  \\
	0 && FA && 0 && F B 
	\end{bmatrix}^{\top}
	\!\!\!{P}
	\begin{bmatrix}
	A &&  0 && B &&  0 \\
	\mu_\theta A && (1\!-\!\mu_\theta) A && 0 && B  \\
	0 && \mu_\Phi F A && (1\!-\!\mu_\Phi)\mathbf{I}_{m} && \mu_\Phi FB  \\
	0 && FA && 0 && F B 
	\end{bmatrix} \\\notag
	&+(1-\mu_\theta)^2\begin{bmatrix}
	0 &&  0 && 0 &&  0\\
	-A &&  A && 0 && 0  \\
	0 && 0 && 0 && 0 \\
	0 && 0 && 0 && 0 \\
	\end{bmatrix}^{\top}
	\!\!\!{P}
	\begin{bmatrix}
	0 &&  0 && 0 &&  0 \\
	-A &&  A && 0 && 0  \\
	0 && 0 && 0 && 0  \\
	0 && 0 && 0 && 0
	\end{bmatrix}\\\label{LMI}
	&+(1-\mu_\Phi)^2\begin{bmatrix}
	0 &&  0 && 0 &&  0 \\
	0 && 0 && 0 && 0  \\
	0 &&  -FA && \mathbf{I}_{m}&& -FB  \\
	0 && 0 && 0 && 0
	\end{bmatrix}^{\top}
	\!\!\!{P} \begin{bmatrix}
	0 &&  0 && 0 &&  0 \\
	0 && 0 && 0 && 0  \\
	0 && -FA && \mathbf{I}_{m} && -FB  \\
	0 && 0 && 0 && 0 
	\end{bmatrix}-P \preceq  0.
	\end{align}
\rule{\textwidth}{0.1pt}
\end{figure*}

We now present the main result of the work for designing a CBC and its associated safety controller via the following theorem.
\begin{theorem} \label{Theorem:3}
	Consider a dt-SLS $\mathbb{S}=(X,U,Y,A,B,w_1,w_2)$ in Definition~\ref{dt-SLS}. Let Assumption~\ref{Assum} hold. Assume that $X_0,X_1,X$ and $U$ are all continuous sets defined by vectors of polynomial inequalities (\emph{i.e.,} semi-algebraic sets) as $X_0 = \{ x \in \mathbb{R}^n \,\big|\, g_0(x) \geq 0 \}$, $X_1 = \{ x \in \mathbb{R}^n \,\big|\, g_1(x) \geq 0 \}$, $X = \{ x \in \mathbb{R}^n \,\big|\, g(x) \geq 0 \}$, and $U = \{u\in \mathbb{R}^m \,\big|\, g_u(x) \geq 0 \}$. Suppose there exist a quadratic function $\mathbb{B}(z) = z^{\top}Pz$, constants $\beta,\eta \in \mathbb{R}_{\geq0}$, with $ \beta > \eta $, $c\in \mathbb{R}_{>0}$, and vectors of SOS polynomials $l_0(x)$, $l_1(x), l(x),l_u(x)$, such that the following expressions are SOS polynomials:
\begin{align} \label{eq5} 
-&z(x)^{\top}Pz(x) - \big[l_0(x);l(x); l_u(x);l_u(x)\big]^\top  \big[g_0(x);g(x);g_u(x);g_u(x)\big] + \eta, \\\label{eq6} 
&z(x)^{\top}Pz(x) - \big[l_1(x);l(x); l_u(x);l_u(x)\big]^\top \big[g_1(x);g(x);g_u(x);g_u(x)\big] - \beta.
\end{align}
Then $\mathbb{B}(z) = z^{\top}Pz$ is a CBC and $F$ is a safety controller satisfying conditions~\eqref{con}-\eqref{eq.21}, with $P$ and $F$ as in~\eqref{LMI}, and
\begin{align}\label{constant}
c=~\!&\text{Tr}\Big(\begin{bmatrix}
1 &&  0 \\
0 && \mu_\theta A \\
0 && 0 \\
0 && 0 
\end{bmatrix}^{\top}
\!\!\!{P}
\begin{bmatrix}
1 &&  0 \\
0 && \mu_\theta A \\
0 && 0 \\
0 && 0 
\end{bmatrix}\Sigma_w
+
\begin{bmatrix}
0 &&  0  \\
0 && (\mu_\theta-1) A \\
0 && 0  \\
0 && 0
\end{bmatrix}^{\top}
\!\!\!{P}
\begin{bmatrix}
0 &&  0  \\
0 && (\mu_\theta-1) A \\
0 && 0\\
0 && 0
\end{bmatrix}\Sigma_w\Big),
\end{align}
where $\Sigma_w$ is the covariance matrix of the augmented noise $w$. 
\end{theorem}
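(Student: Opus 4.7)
The statement bundles three claims: the level-set inequalities \eqref{con} and \eqref{con1}, and the supermartingale-type bound \eqref{eq.21} with the explicit constant given in \eqref{constant}. I would verify each separately, using a Positivstellensatz-style argument for the first two and a direct conditional-expectation computation for the third.

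\emph{Level-set conditions via Positivstellensatz.} For \eqref{con}, I would use the hypothesis that the polynomial in \eqref{eq5} is a sum of squares and is therefore globally nonnegative. On the semi-algebraic set $X_0\times X\times U\times U$, each constraint polynomial $g_0,g,g_u$ is nonnegative when evaluated at the appropriate component of $z=[x;\hat x;u;\hat u]$, and because the multipliers $l_0,l,l_u$ are themselves SOS (hence nonnegative), the inner product $[l_0;l;l_u;l_u]^\top[g_0;g;g_u;g_u]$ is $\ge 0$. Rearranging \eqref{eq5} then yields $\eta\ge z^\top Pz+[\text{nonnegative}]\ge \mathbb{B}(z)$, which is \eqref{con}. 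An identical argument with \eqref{eq6}, up to sign, yields $\mathbb{B}(z)\ge\beta$ on $X_1\times X\times U\times U$, which is \eqref{con1}.

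\emph{Supermartingale condition.} Starting from the augmented dynamics $z(k+1)=\tilde A(k)z(k)+\tilde E(k)w(k)$ in \eqref{augment}, I would expand $\mathbb{B}(z(k+1))=z(k+1)^\top P z(k+1)$ into four pieces: $z^\top\tilde A^\top P\tilde A\,z$, $w^\top\tilde E^\top P\tilde E\,w$, and two cross-terms linear in $w$. Taking conditional expectation $\EE[\,\cdot\mid z(k)]$, the $w$-linear cross terms vanish because $w(k)$ is zero-mean and independent of $z(k)$ and of the Bernoulli variables $\delta_\theta(k),\delta_\Phi(k)$. Using the decompositions \eqref{aug1} and \eqref{aug2}, the mutual independence of $\delta_\theta,\delta_\Phi$, and $\EE[\delta_i]=0$, the remaining quadratic expectations collapse to
\begin{equation*}
\EE\bigl[\tilde A(k)^\top P\tilde A(k)\bigr]=\tilde A_0^\top P\tilde A_0+\sum_{i=1}^{2}\text{Var}(\delta_i)\,\tilde A_i^\top P\tilde A_i,
\end{equation*}
and analogously for $\tilde E(k)$. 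The quadratic-in-$z$ part of $\EE[\mathbb{B}(z(k+1))\mid z(k)]$ is then $z(k)^\top\bigl(\tilde A_0^\top P\tilde A_0+\sum_i\text{Var}(\delta_i)\tilde A_i^\top P\tilde A_i\bigr)z(k)$, which by Assumption~\ref{Assum} (that is, the matrix inequality \eqref{LMI}) is bounded above by $z(k)^\top P z(k)=\mathbb{B}(z(k))$. The leftover noise contribution, computed as $\EE[w^\top\tilde E^\top P\tilde E\,w\mid z(k)]=\text{Tr}\bigl(\EE[\tilde E^\top P\tilde E]\,\Sigma_w\bigr)$ together with $\tilde E_2=\mathbf{0}$, matches the constant $c$ of \eqref{constant}. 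Combining these bounds gives \eqref{eq.21}.

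\emph{Main obstacle.} The delicate step is the bookkeeping that identifies $\text{Var}(\delta_\theta)\,\tilde A_1^\top P\tilde A_1$, $\text{Var}(\delta_\Phi)\,\tilde A_2^\top P\tilde A_2$, and $\text{Var}(\delta_\theta)\,\tilde E_1^\top P\tilde E_1$ with the correction blocks explicitly displayed in \eqref{LMI} and \eqref{constant}. Concretely, one must factor $\mu_\theta,\mu_\Phi$ out of $\tilde A_1,\tilde A_2,\tilde E_1$ and combine them with $\text{Var}(\delta_\theta)=\tfrac{1}{\mu_\theta}-1$ and $\text{Var}(\delta_\Phi)=\tfrac{1}{\mu_\Phi}-1$ to reproduce the scalar prefactors on those blocks. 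Once this arithmetic is verified, the three conditions of Definition~\ref{barrier} follow, and the matrix $F$ delivered by Assumption~\ref{Assum} is the sought safety controller.
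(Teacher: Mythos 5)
Your proposal follows the paper's proof essentially verbatim: you establish \eqref{con} and \eqref{con1} exactly as the paper does, by combining the global nonnegativity of the SOS expressions \eqref{eq5}--\eqref{eq6} with the nonnegativity of the multiplier--constraint products $[l_0;l;l_u;l_u]^\top[g_0;g;g_u;g_u]$ on the respective semi-algebraic sets, and you establish \eqref{eq.21} by the same expansion of $\mathbb{B}(z(k+1))$, vanishing of the $w$-linear cross terms under the conditional expectation, the decompositions \eqref{aug1}--\eqref{aug2} with $\EE[\delta_i]=0$ and $\EE[\delta_i^2]=\mathrm{Var}(\delta_i)$, and an appeal to the matrix inequality \eqref{LMI} plus the trace identity yielding the constant $c$ in \eqref{constant}. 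The one step you flag but do not carry out --- matching $\mathrm{Var}(\delta_\theta)\,\tilde{A}_1^\top P\tilde{A}_1$, $\mathrm{Var}(\delta_\Phi)\,\tilde{A}_2^\top P\tilde{A}_2$, and $\mathrm{Var}(\delta_\theta)\,\tilde{E}_1^\top P\tilde{E}_1$ against the displayed prefactors in \eqref{LMI} and \eqref{constant} --- is glossed at exactly the same level of detail in the paper's proof (``one can readily show''), so your argument is correct to the same extent and takes the identical route.
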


\begin{proof}
We first show that the matrix inequality in~\eqref{LMI} implies the satisfaction of condition~\eqref{eq.21}. By applying function $\mathbb{B}$ on the augmented dt-SLS in~\eqref{augment}, one has
\begin{align*}     
    \mathbb{B}(&z(k+1)) = z(k)^{\top} \Tilde{A}(k)^{\top} {P} \Tilde{A}(k) z(k) + 2w(k)^{\top} \tilde{E}(k)^{\top} {P} \Tilde{A}(k) z(k)
    + w(k)^{\top} \tilde{E}(k)^{\top} {P} \tilde{E}(k)w(k).
\end{align*} 

By taking the expected value of $\mathbb{B}$ with respect to different sources of noises and since $w\!\sim\! \mathcal{N}\!(0,\Sigma_{w}\!)$, one has
\begin{align}\label{eq.48}
 &\EE\Big[\mathbb{B}(z(k+1)) \mid z(k)\Big] = \EE\left[z(k)^{\top} \Tilde{A}(k)^{\top} {P} \Tilde{A}(k) z(k)\right] + \EE\left[w(k)^{\top} \tilde{E}(k)^{\top} {P} \tilde{E}(k)w(k)\right]\!\!.
 \end{align}
By substituting $\tilde{A}(k) = \tilde{A}_{0} + \delta_{\theta}\tilde{A}_{1} + \delta_{\Phi}\tilde{A}_{2}$ in~\eqref{eq.48}, we have
\begin{align*}
\EE&\Big[\mathbb{B}(z(k+1)) \mid z(k)\Big] =
\EE\!\left[z(k)^{\top}\tilde{A}_0^{\top}P\tilde{A}_0z(k)\right]+\EE\left[z(k)^{\top}\delta_{\theta}\tilde{A}_1^{\top} P\delta_{\theta}\tilde{A}_1z(k)\right]\!+\!\EE\left[z(k)^{\top}\delta_{\Phi}\tilde{A}_2^{\top} P\delta_{\Phi}\tilde{A}_2z(k)\right]\\ &+\EE\left[w(k)^{\top} \tilde{E}(k)^{\top} P \tilde{E}(k)w(k)\right]
= z(k)^{\top} \Big( \tilde{A}_0^{\top} P\tilde{A}_0 +\EE\big[\delta_{\theta}^2\big]\tilde{A}_1^{\top} P \tilde{A}_1+\EE\big[\delta_{\Phi}^2\big]\tilde{A}_2^{\top}P\tilde{A}_2 \Big)z(k)  \\ &+\text{Tr}\big(\EE\big[\tilde{E}(k)^{\top} {P} \tilde{E}(k)\big] \underbrace{\EE\left[w(k)w(k)^{\top}\right]}_{\Sigma_w}\big).
\end{align*}
By employing $\EE\big[\delta_{\theta}^2(k)\big] =\frac{1}{\mu_\theta}-1$, $\EE\big[\delta_{\phi}^2(k)\big] =\frac{1}{\mu_\phi}-1$, $\tilde{E}(k) = \tilde{E}_{0} + \delta_{\theta}\tilde{E}_{1} + \delta_{\Phi}\tilde{E}_{2}$, and the matrix inequality in~\eqref{LMI}, one can readily show that  
\begin{equation*}
\EE\Big[ \mathbb{B}(z(k+1)) \,\big|\, z(k) \Big] \leq \mathbb{B}(z(k))+c,
\end{equation*} 
with $c$ as in~\eqref{constant}.

\noindent We now proceed with showing that conditions~\eqref{eq5},~\eqref{eq6} imply the satisfaction of conditions~\eqref{con},~\eqref{con1}, as well. Since~\eqref{eq5} is an SOS polynomial, we have 
\begin{align*}
    0\leq -z(x)^{\top}Pz(x) - \big[l_0(x);l(x); l_u(x);l_u(x)\big]^\top \big[g_0(x);g(x);g_u(x);g_u(x)\big] + \eta.
\end{align*}
 Given that the term $\big[l_0(x);l(x);l_u(x);l_u(x)\big]^\top\big[g_0(x);g(x);g_u(x);g_u(x)\big]$ is non-negative over $X_{0}\times X \times U\times U$, condition~\eqref{eq5} implies satisfaction of condition~\eqref{con}. Similarly, we can show that condition \eqref{eq6} implies \eqref{con1}. Hence, $\mathbb{B}(z) = z^{\top}Pz$ is a CBC and $F$ is a safety controller satisfying conditions~\eqref{con}-\eqref{eq.21}, which concludes the proof.
\end{proof}

We now propose the following algorithm to elucidate the necessary procedure for computing CBC and its associated safety controller.

\begin{algorithm}[H]
	\caption{CBC and safety controller computation}
	\begin{algorithmic}[1]
		\Require dt-SLS $\mathbb{S}=(X,U,Y,A,B,w_1,w_2)$, initial set $X_0$, unsafe set $X_1$, time horizon $\mathcal T$, chances of successful packet transmissions $\mu_\theta$ and $\mu_\phi$
		\State Solve matrix inequality in~\eqref{LMI} using semi-definite programming (SDP) solver \textsf{SeDuMi} to compute  $c$, $P\succ0$ and feedback controller gain $F$
		\State{Solve SOS programming via software tool \textsf{SOSTOOLS}, using $\mathbb{B}(z)$, to design level sets $\eta, \beta$ by satisfying conditions~\eqref{eq5},~\eqref{eq6}}
		\State Compute $\PP \Big\{\mathbb{S}\models\Upsilon\Big\}\ge 1-\frac{\eta + c \mathcal T}{\beta}$
		\Ensure Barrier certificate $\mathbb{B}(z)$, controller $F$, safety guarantee $\PP \Big\{\mathbb{S}\models\Upsilon\Big\}\ge 1-\frac{\eta + c \mathcal T}{\beta}$
	\end{algorithmic}
	\label{Alg}
\end{algorithm}

\section{Case Study: Electrical Motor}{\label{sec:case}}
We demonstrate the efficacy of our proposed results by applying them to an electrical motor. Under a multi-hop wireless communication network, chances of successful packet transmissions between sensor-to-controller and controller-to-actuator branches are given as $\mu_\theta=0.9$ and $\mu_\phi=0.9$, respectively (Fig.~\ref{fig.4}). The main goal is to design a safety controller to maintain the electrical current of the motor in a safe set within a finite time horizon while respecting the multi-hop wireless network requirements. The evolution of the electrical motor can be described by the following dynamics~\cite{ref20}: 

\begin{equation*}
\mathbb{S}\!:
    \begin{cases}
        x(k+1) = Ax(k) + Bu(k) + {w_1}(k), \\
      y(k) = x(k) + w_2(k),
    \end{cases}  
    \label{eq.2}
\end{equation*}

\noindent with
\begin{align*}
&A \!=\! \begin{bmatrix} \frac{L_d}{R}\frac{1}{e^{\frac{R}{L_d}T_s}} & L_q\frac{\omega_{el}}{R}(1\!-\!e^{-\frac{R}{L_d}T_s})\\
-L_d\frac{\omega_{el}}{R}(1\!-\!e^{-\frac{R}{L_q}T_s}) & \frac{L_q}{R}\frac{1}{e^{\frac{R}{L_q}T_s}} \end{bmatrix}\!\!, \\&B \!=\! \begin{bmatrix} \frac{1}{R}(1\!-\!e^{-\frac{R}{L_d}T_s}) & \frac{L_q}{R}(1\!-\!e^{-\frac{R}{L_d}T_s}) \\
\frac{L_d}{R}(1\!-\!e^{-\frac{R}{L_q}T_s}) & \frac{1}{R}(1\!-\!e^{-\frac{R}{L_q}T_s}) \end{bmatrix}\!\!,
\end{align*}
\!\!where $T_s$ is the sampling time, $R = 0.025~\text{[Ohm]}$ is the resistance of the motor's stator winding, $\omega_{el} = 6283.2~\text{[rad/s]}$ is electrical angular frequency of the AC power supply applied to the motor, and $L_d = 0.0001$ [ht]\footnote{``ht'' stands for Henry turn as inductance unit.} and $L_q = 0.00012~\text{[ht]}$ are the d-axis and q-axis inductances of a three-phase AC motor. In addition, ${w_1}=[w_{11};w_{12}]$, ${w_2}=[w_{21};w_{22}]$, and $x = [x_d;x_q]$, where $x_d$ and $x_q$ are currents in the $d$ and $q$ coordinates over time, respectively. The control input is $u = [u_d;u_q]$, where $u_d$ and $u_q$ represent voltages applied to the motor in the $d$ and $q$ coordinates, respectively.

\begin{figure}
	\includegraphics[scale=0.85]{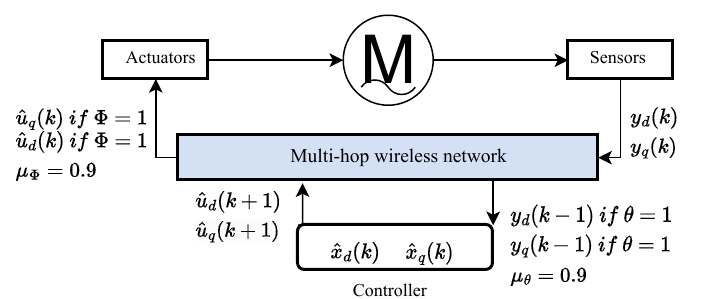}\vspace{0.02cm}
	\centering
	\caption{Communications between an electrical motor \textbf{\textsf{M}} and controller, under a multi-hub wireless communication network, with chances of successful packet transmissions between sensor-to-controller and controller-to-actuator branches as $\mu_\theta=0.9$ and $\mu_\phi=0.9$, respectively.}
	\label{fig.4}\vspace{0.3cm}
\end{figure}

The regions of interest are given by $X=[-2, 2]^2$, $X_0=[-0.2, 0.2]^2$, and $X_1=[-2,-1.2]\times[-2,2]\cup[1.2,2]\times[-2,2]$. The ultimate objective is to design a CBC for the electrical motor and its associated safety controller such that the system's states stay within the safe set $[-1.2, 1.2]\times[-2,2]$.

To achieve this, we first employ semi-definite programming (SDP) solver \textsf{SeDuMi} and design barrier matrix $P$ together with its corresponding controller matrix $F$ in~\eqref{LMI} as in~\eqref{p}.
We now employ barrier matrix $P$ and design its corresponding initial and unsafe level sets $\eta, \beta$ by satisfying conditions~\eqref{eq5},~\eqref{eq6} via software tool \textsf{SOSTOOLS}, as $\eta=0.0001306, \beta=0.7233$. In addition, the constant $c$ is computed as $0.000166$.

\begin{figure*}
	\begin{align}\label{p}
	&P\!=\! \begin{bmatrix} 
	0.92 & 0.68 & -0.58 & -0.031 & 0.34 & -0.061 & -0.073 & -0.074\\ 0.68 & 3.7 & -0.41 & -0.85 & 0.21 & 0.33 & 0.036 & 0.12\\ -0.58 & -0.41 & 0.94 & 0.22 & -0.088 & -0.056 & 0.43 & -0.034\\ -0.031 & -0.85 & 0.22 & 1.2 & -0.071 & 0.066 & 0.23 & 0.14\\ 0.34 & 0.21 & -0.088 & -0.071 & 1.5 & -0.013 & -0.85 & -0.093\\ -0.061 & 0.33 & -0.056 & 0.066 & -0.013 & 0.91 & -0.13 & -0.13\\ -0.073 & 0.036 & 0.43 & 0.23 & -0.85 & -0.13 & 1.5 & -0.053\\ -0.074 & 0.12 & -0.034 & 0.14 & -0.093 & -0.13 & -0.053 & 0.88
	\end{bmatrix}\!\!,~F\!=\! \begin{bmatrix}
	-0.68  &  -0.70\\
	0.79  & -0.60
	\end{bmatrix}\!\!.
	\end{align}
\rule{\textwidth}{0.1pt}
\end{figure*}

Now by employing Theorem~\ref{Kushner}, we guarantee that all state trajectories of $\mathbb{S}$ starting from $X_0 = [-0.2, 0.2]^2$ remain in the safe set $[-1.2, 1.2]\times[-2,2]$ during time horizon $\mathcal{T}=100$ with a guaranteed probability of at least $97.68\%$. Closed-loop state trajectories of the electrical motor under the designed controller $F$ in~\eqref{LMI} are depicted in Figs.~\ref{fig:b},~\ref{fig:a},
with $10$ different noise realizations. As it can be observed, all trajectories of the electrical motor respect the desired safety specification.	
\begin{figure}
  \centering
  	\includegraphics[scale=0.61]{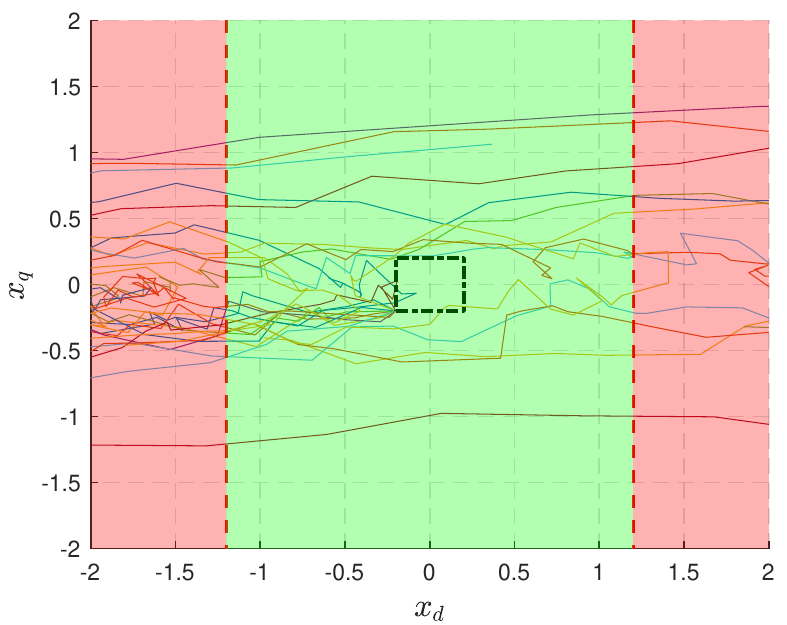}\hspace{0.3cm}
    \includegraphics[scale=0.61]{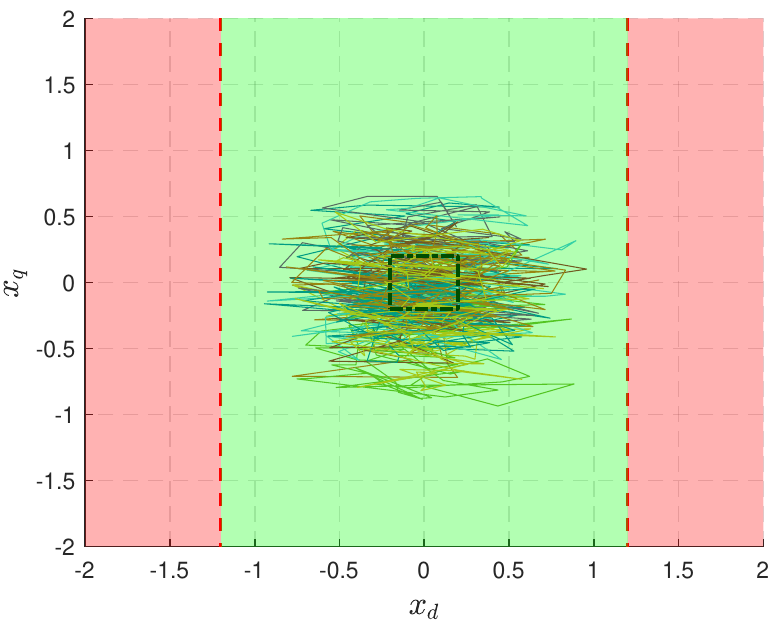}
  \caption{Closed-loop state trajectories of the electrical motor with $10$ different noise realizations starting from the initial set $[-0.2, 0.2]^2$ without controller (left figure) and with controller (right figure). Green and red boxes are safe and unsafe regions, respectively. The central rectangle indicates the initial set.}
  \label{fig:b}
\end{figure}
\begin{figure}
\begin{center}
\includegraphics[scale=0.57]{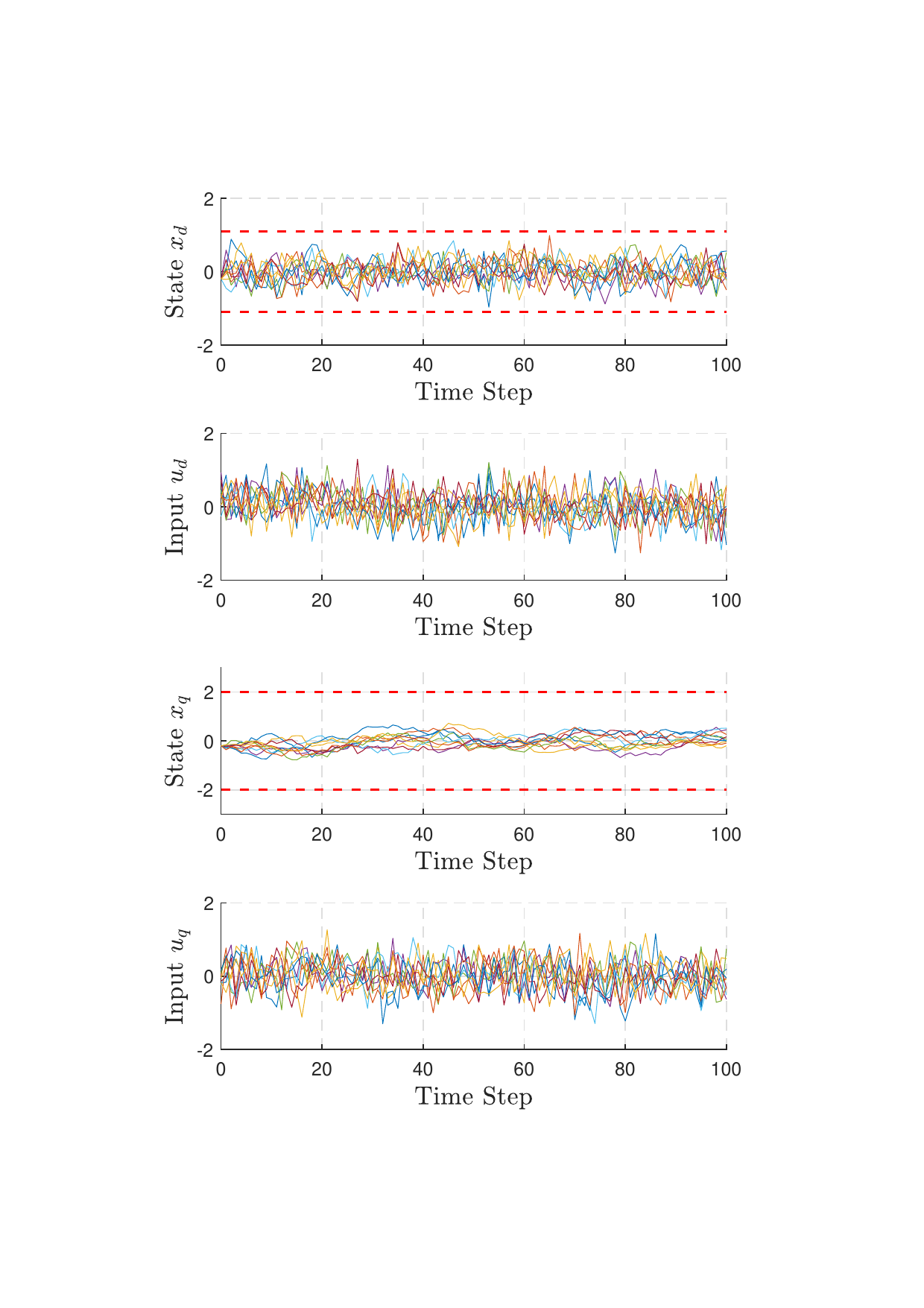}
\includegraphics[scale=0.57]{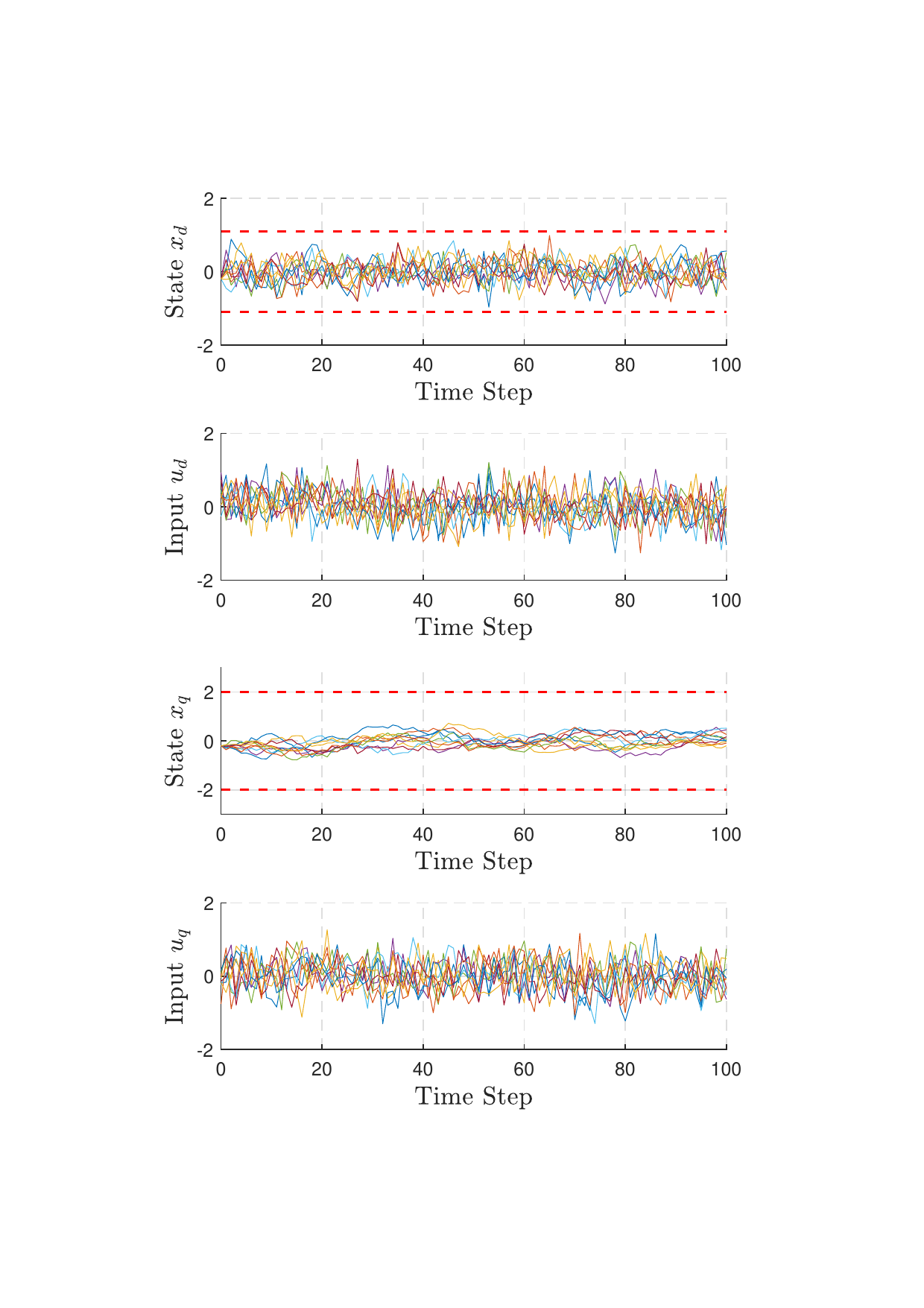}
\caption{Closed-loop state trajectories of the electrical motor under the synthesized controller with $10$ different noise realizations over $100$ time steps. Red dashed lines are boundaries of the safe set.}
\label{fig:a}
\end{center}
\end{figure}
\section{Conclusion}\label{conclude}
In this work, we proposed a formal approach for safety controller design of stochastic control systems while considering wireless communication networks between sensors, controllers, and actuators. Under the notion of control barrier certificates, we provided a lower bound on the satisfaction probability of the safety property over a finite time horizon for wirelessly connected stochastic control systems. We proposed a systematic approach in the form of sum-of-squares optimization and matrix inequalities for the construction of CBC and its associated controller. We illustrated the efficacy of our approach on a permanent magnet synchronous motor. Developing a formal approach for safety controller synthesis of \emph{networks of} wirelessly-connected stochastic control systems with \emph{nonlinear dynamics} is a potential future direction.

\bibliographystyle{alpha}
\bibliography{biblio}

\newcommand{\etalchar}[1]{$^{#1}$}
\begin{thebibliography}{MHMJZ20}

\bibitem[ACE{\etalchar{+}}19]{ref10}
A.~D. Ames, S.~Coogan, M.~Egerstedt, G.~Notomista, K.~Sreenath, and P.~Tabuada.
\newblock {C}ontrol {B}arrier {F}unctions: {T}heory and {A}pplications.
\newblock In {\em Proceedings of the 18th European Control Conference (ECC)},
  pages 3420--3431, 2019.

\bibitem[ALZ22]{anand2022small}
M.~Anand, A.~Lavaei, and M.~Zamani.
\newblock From small-gain theory to compositional construction of barrier
  certificates for large-scale stochastic systems.
\newblock {\em IEEE Transactions on Automatic Control}, 67(10):5638--5645,
  2022.

\bibitem[APLS08]{ref1}
A.~Abate, M.~Prandini, J.~Lygeros, and S.~Sastry.
\newblock Probabilistic reachability and safety for controlled discrete time
  stochastic hybrid systems.
\newblock {\em Automatica}, 44(11):2724--2734, 2008.

\bibitem[AXGT17]{ref12}
A.~D. Ames, X.~Xu, J.~W. Grizzle, and P.~Tabuada.
\newblock Control {B}arrier {F}unction {B}ased {Q}uadratic {P}rograms for
  {S}afety {C}ritical {S}ystems.
\newblock {\em IEEE Transactions on Automatic Control}, 62(8):3861--3876, 2017.

\bibitem[BK08]{baier2008principles}
C.~Baier and J.-P. Katoen.
\newblock {\em Principles of model checking}.
\newblock MIT press, 2008.

\bibitem[Cla21]{ref7}
A.~Clark.
\newblock {V}erification and {S}ynthesis of {C}ontrol {B}arrier {F}unctions.
\newblock In {\em Proceedings of the 60th IEEE Conference on Decision and
  Control (CDC)}, pages 6105--6112, 2021.

\bibitem[FZTS11]{ref24}
F.~Ferrari, M.~Zimmerling, L.~Thiele, and O.~Saukh.
\newblock Efficient network flooding and time synchronization with glossy.
\newblock In {\em Proceedings of the 10th ACM/IEEE International Conference on
  Information Processing in Sensor Networks}, pages 73--84, 2011.

\bibitem[Hes04]{ref22}
J.~P. Hespanha.
\newblock {S}tochastic {H}ybrid {S}ystems: {A}pplication to {C}ommunication
  {N}etworks.
\newblock In {\em Hybrid Systems: Computation and Control}, pages 387--401,
  2004.

\bibitem[HHHK13]{hahn2013compositional}
E.~M. Hahn, A.~Hartmanns, H.~Hermanns, and J.-P. Katoen.
\newblock A compositional modelling and analysis framework for stochastic
  hybrid systems.
\newblock {\em Formal Methods in System Design}, 43(2):191--232, 2013.

\bibitem[HK19]{7}
L.~Huang and E.-Y. Kang.
\newblock Formal {V}erification of {S}afety {\&} {S}ecurity {R}elated {T}iming
  {C}onstraints for a {C}ooperative {A}utomotive {S}ystem.
\newblock In {\em Proceedings of Fundamental Approaches to Software
  Engineering}, pages 210--227, 2019.

\bibitem[JP09]{julius2009approximations}
A.~A. Julius and G.~J. Pappas.
\newblock Approximations of stochastic hybrid systems.
\newblock {\em IEEE Transactions on Automatic Control}, 54(6):1193--1203, 2009.

\bibitem[JSZ21]{ref11}
P.~Jagtap, S.~Soudjani, and M.~Zamani.
\newblock {F}ormal {S}ynthesis of {S}tochastic {S}ystems via {C}ontrol
  {B}arrier {C}ertificates.
\newblock {\em IEEE Transactions on Automatic Control}, 66(7):3097--3110, 2021.

\bibitem[JZZ{\etalchar{+}}18]{ref25}
R.~Jacob, L.~Zhang, M.~Zimmerling, J.~Beutel, S.~Chakraborty, and L.~Thiele.
\newblock {T}{T}{W}: a time-triggered wireless design for {C}{P}{S}.
\newblock In {\em Proceedings of the Design, Automation \& Test in Europe
  Conference \& Exhibition (DATE)}, pages 865--868, 2018.

\bibitem[Kus67]{ref26}
H.~J. Kushner.
\newblock Stochastic stability and control.
\newblock Technical report, Brown Univ Providence RI, 1967.

\bibitem[LF22a]{lavaei2022compositional}
A.~Lavaei and E.~Frazzoli.
\newblock Compositional controller synthesis for interconnected stochastic
  systems with {M}arkovian switching.
\newblock In {\em 2022 American Control Conference (ACC)}, pages 4838--4843,
  2022.

\bibitem[LF22b]{lavaei2022scalable}
A.~Lavaei and E.~Frazzoli.
\newblock Scalable synthesis of finite {MDPs} for large-scale stochastic
  switching systems.
\newblock In {\em 2022 IEEE 61st Conference on Decision and Control (CDC)},
  pages 7510--7515, 2022.

\bibitem[LSAZ22]{ref4}
A.~Lavaei, S.~Soudjani, A.~Abate, and M.~Zamani.
\newblock Automated verification and synthesis of stochastic hybrid systems: A
  survey.
\newblock {\em Automatica}, 146, 2022.

\bibitem[LSF22]{ref6}
A.~Lavaei, S.~Soudjani, and E.~Frazzoli.
\newblock {S}afety {B}arrier {C}ertificates for {S}tochastic {H}ybrid
  {S}ystems.
\newblock In {\em Proceedings of the American Control Conference (ACC)}, pages
  880--885, 2022.

\bibitem[LSZ20a]{ref3}
A.~Lavaei, S.~Soudjani, and M.~Zamani.
\newblock Compositional abstraction-based synthesis for networks of stochastic
  switched systems.
\newblock {\em Automatica}, 114, 2020.

\bibitem[LSZ20b]{lavaei2020compositional}
A.~Lavaei, S.~Soudjani, and M.~Zamani.
\newblock Compositional (in) finite abstractions for large-scale interconnected
  stochastic systems.
\newblock {\em IEEE Transactions on Automatic Control}, 65(12):5280--5295,
  2020.

\bibitem[LZ22]{lavaei2022dissipativity}
A.~Lavaei and M.~Zamani.
\newblock From dissipativity theory to compositional synthesis of large-scale
  stochastic switched systems.
\newblock {\em IEEE Transactions on Automatic Control}, 67(9):4422--4437, 2022.

\bibitem[MBH{\etalchar{+}}22]{ref19}
F.~Mager, D.~Baumann, C.~Herrmann, S.~Trimpe, and M.~Zimmerling.
\newblock {S}caling beyond {B}andwidth {L}imitations: {W}ireless {C}ontrol with
  {S}tability {G}uarantees under {O}verload.
\newblock {\em ACM Trans. Cyber-Phys. Syst.}, 6(3), sep 2022.

\bibitem[MBJ{\etalchar{+}}19]{ref17}
F.~Mager, D.~Baumann, R.~Jacob, L.~Thiele, S.~Trimpe, and M.~Zimmerling.
\newblock Feedback control goes wireless: guaranteed stability over low-power
  multi-hop networks.
\newblock {\em Proceedings of the 10th ACM/IEEE International Conference on
  Cyber-Physical Systems (ICCPS '19)}, pages 97--108, 2019.

\bibitem[MHMJZ20]{ref20}
M.~Maggio, A.~Hamann, E.~Mayer-John, and D.~Ziegenbein.
\newblock Control-system stability under consecutive deadline misses
  constraints.
\newblock In {\em Proceedings of the 32nd euromicro conference on real-time
  systems (ECRTS 2020)}, pages 1--23, 2020.

\bibitem[NSZ21]{nejati2021compositional}
A.~Nejati, S.~Soudjani, and M.~Zamani.
\newblock Compositional abstraction-based synthesis for continuous-time
  stochastic hybrid systems.
\newblock {\em European Journal of Control}, 57:82--94, 2021.

\bibitem[NSZ22]{nejati2022compositional}
A.~Nejati, S.~Soudjani, and M.~Zamani.
\newblock Compositional construction of control barrier functions for
  continuous-time stochastic hybrid systems.
\newblock {\em Automatica}, 145:110513, 2022.

\bibitem[NZ22]{nejati2022dissipativity}
A.~Nejati and M.~Zamani.
\newblock From dissipativity theory to compositional construction of control
  barrier certificates.
\newblock {\em Leibniz Transactions on Embedded Systems}, 8(2):06--1, 2022.

\bibitem[PJ04]{prajna2004safety}
S.~Prajna and A.~Jadbabaie.
\newblock Safety verification of hybrid systems using barrier certificates.
\newblock In {\em Proceedings of the International Workshop on Hybrid Systems:
  Computation and Control (HSCC)}, pages 477--492, 2004.

\bibitem[PJP04]{ref5}
S.~Prajna, A.~Jadbabaie, and G.J. Pappas.
\newblock Stochastic safety verification using barrier certificates.
\newblock In {\em Proceedings of the 43rd IEEE Conference on Decision and
  Control (CDC)}, pages 929--934, 2004.

\bibitem[SAM17]{SAM17}
S.~Soudjani, A.~Abate, and R.~Majumdar.
\newblock Dynamic {B}ayesian {N}etworks for {F}ormal {V}erification of
  {S}tructured {S}tochastic {P}rocesses.
\newblock {\em Acta Informatica}, 54(2):217--242, 2017.

\bibitem[SDC21]{ref8}
C.~Santoyo, M.~Dutreix, and S.~Coogan.
\newblock A barrier function approach to finite-time stochastic system
  verification and control.
\newblock {\em Automatica}, 125, 2021.

\bibitem[WAE17]{ref13}
L.~Wang, A.~D. Ames, and M.~Egerstedt.
\newblock Safety {B}arrier {C}ertificates for {C}ollisions-{F}ree {M}ultirobot
  {S}ystems.
\newblock {\em IEEE Transactions on Robotics}, 33(3):661--674, 2017.

\bibitem[WLL18]{ref16}
B.~Wu, M.~D. Lemmon, and H.~Lin.
\newblock {F}ormal {M}ethods for {S}tability {A}nalysis of {N}etworked
  {C}ontrol {S}ystems {W}ith {IEEE} 802.15.4 {P}rotocol.
\newblock {\em IEEE Transactions on Control Systems Technology},
  26(5):1635--1645, 2018.

\bibitem[ZAG15]{zamani2015symbolic}
M.~Zamani, A.~Abate, and A.~Girard.
\newblock Symbolic models for stochastic switched systems: A discretization and
  a discretization-free approach.
\newblock {\em Automatica}, 55:183--196, 2015.

\bibitem[ZFMT13]{ref23}
M.~Zimmerling, F.~Ferrari, L.~Mottola, and L.~Thiele.
\newblock On modeling low-power wireless protocols based on synchronous packet
  transmissions.
\newblock In {\em Proceedings of the 21st International Symposium on Modelling,
  Analysis and Simulation of Computer and Telecommunication Systems}, pages
  546--555, 2013.

\bibitem[ZSR{\etalchar{+}}10]{ref9}
L.~Zhang, Z.~She, S.~Ratschan, H.~Hermanns, and E.~M. Hahn.
\newblock {S}afety {V}erification for {P}robabilistic {H}ybrid {S}ystems.
\newblock In {\em Proceedings of the Computer Aided Verification}, pages
  196--211, 2010.

\end{thebibliography}

\section{Appendix}
Here, we show that how the matrices of augmented dt-SLS in~(\ref{ag}) are obtained. By augmenting the real and estimated states and inputs of dt-SLS as $z= [x;\hat{x};u;\hat{u}]$, one has
\begin{equation*}
	 z(k+1):=\begin{cases}
	x(k+1) = A x(k) + B u(k) + w_1(k),\\
     \hat{x}(k+1) = \theta(k) A x(k)+ (1-\theta(k))A \hat{x}(k) +B \hat{u}(k) + \theta(k) A w_2(k), \\
	{u}(k+1) = \Phi(k)F(A\hat{x}(k)+ B\hat{u}(k)) + (1-\Phi(k))u(k),  \\
     \hat{u}(k+1)= F(A\hat{x}(k)+ B\hat{u}(k)).
	\end{cases}  
\end{equation*}
 Then, one can rewrite $z(k+1)$ as
\begin{align*}
\begin{bmatrix}
x(k+1) \\
\hat{x}(k+1) \\
u(k+1) \\
\hat{u}(k+1)
\end{bmatrix}
= 
\begin{bmatrix}
Ax(k) + B u(k) \\
\theta(k) A x(k)+ (1-\theta(k))A \hat{x}(k) +B \hat{u}(k)\\
 \Phi(k) F(A\hat{x}(k)+ B\hat{u}(k)) + (1-\Phi(k))u(k) \\
F(A\hat{x}(k)+B\hat{u}(k))
\end{bmatrix}
+
&\begin{bmatrix}
w_1(k) \\
 \theta(k) A w_2(k)\\
0\\
0 
\end{bmatrix}\!\!.
\end{align*}
Eventually, $z(k+1)$ can be written based on $z(k)$ and $w(k)$ as
\begin{align*}
\underbrace{\begin{bmatrix}
x(k+1) \\
\hat{x}(k+1) \\
u(k+1) \\
\hat{u}(k+1)
\end{bmatrix}}_{z(k+1)}
= 
&\underbrace{\begin{bmatrix}
A &&  0 && B &&  0 &&   \\
\theta(k) A && (1-\theta(k))A && 0 && B  \\
0 && \Phi(k) F A && (1-\Phi(k))\mathbf{I}_{m} && \Phi(k) FB  \\
0 && FA && 0 && FB 
\end{bmatrix}}_{\tilde{A}(k)}
\underbrace{\begin{bmatrix}
x(k) \\
\hat{x}(k)\\
u(k)\\
\hat{u}(k) \\
\end{bmatrix}}_{z(k)}
+\underbrace{\begin{bmatrix}
1 &  0 \\
0 & \theta(k) A\\
0 & 0\\
0 & 0
\end{bmatrix}}_{\tilde{E}(k)}
\underbrace{\begin{bmatrix}
    w_1(k) \\
    w_2(k)
\end{bmatrix}}_{w(k)}\!\!.
\end{align*}
\end{document}